\documentclass{amsart}
\usepackage[utf8]{inputenc}
\usepackage{tikz}
\usepackage{graphicx,hyperref,colortbl,amssymb,adjustbox}
\usepackage{tabularx,booktabs,array,mathdots,nicematrix,tikz}
\usepackage{academicons}
\usepackage{xcolor}

\newtheorem{theorem}{Theorem}

\newtheorem{lemma}[theorem]{Lemma}
\newtheorem{definition}[theorem]{Definition}

\title{Multiplayer boycotts in convex games}
\author{Robbert Fokkink \and Hans de Munnik}
\address{Institute of Applied Mathematics\\
Delft University of Technology\\
Mekelweg 4 \\
2628 CD Delft, Netherlands}
\email{r.j.fokkink@tudelft.nl,\ 
j.c.demunnik@student.tudelft.nl
}
%\date{January 2024}

\begin{document}

\maketitle

\begin{abstract}
    We extend the notion of boycotts between players in cooperative games to boycotts between coalitions. 
    We prove that convex games offer a proper setting for studying these games. Boycotts have a heterogeneous effect.
    Individual
    players that are targeted by many-on-one boycotts suffer most, while non-participating
    players may actually benefit from a boycott. 
\end{abstract}

During the nineteen-eighties Western nations imposed economic
sanctions against South-Africa. Their effectiveness has been 
debated%~\cite{teoh1999effect}
, but the apartheid regime did unravel sooner than anticipated~\cite{manby1992south}.
It remains a prime example of the power of boycotts, which are
increasingly being used by Western governments~\cite{meyer2023international}.
Today, economic sanctions are imposed against Iran, North Korea,
the
Russian Federation, and Venezuela. 
We extend previous work on boycotts in cooperative games
and demonstrate the
usefulness of the Shapley value to quantify their impact.
%This value is a standard tool to characterize individuals~\cite{lindelauf2013cooperative} 
%or determine the robustness of economic networks~\cite{nouweland}. 

\section{Boycott games}

A cooperative game with transferable utility is a pair
$(N,v)$ where $N$ is the set of \emph{players} and
the \emph{characteristic function} $v$ is 
defined for all \emph{coalitions}
$S\subset N$.
%In this paper 
%$v(S)$ represents the total 
%utility that is produced or traded within $S$
%and therefore $v$ is monotonic, i.e.,
%if $S\subset T$ then $v(S)\leq v(T)$ and $v(\emptyset)=0$. 
What is the impact if 
one coalition decides to
boycott another coalition? This question has already been
asked and answered by Besner~\cite{besner2022impacts},
for single players.
We extend his work to coalitions. 

\begin{definition}[Besner]
Two players $i,j$ are \emph{disjointly productive} if for all
$S\subset N\setminus\{i,j\}$ we have
\[
v(S\cup\{i,j\})-v(S\cup\{j\})=v(S\cup\{i\})-v(S).
\]
Disjoint coalitions $A, B$ are \emph{disjointly
productive} if all $i\in A$ and $j\in B$
are disjointly productive. 
%Such players have no economic interaction.  
\end{definition}
%According to this definition the marginal contribution of $i$ 
%remains the same if $j$ enters or leaves a coalition.
%By symmetry, $j$'s marginal contribution 
%remains the same if $i$ enters or leaves. 
%Disjointly productive players have no 
%economic interaction. 

We write marginal contribution as
\[
dv_i(S)=v(S\cup\{i\})-v(S)
\]
or more generally
\[
dv_C(S)=v(S\cup C)-v(S).
\]
%If we line up the players in $C$ as $c_1, c_2,\ldots,c_k$, then
%the marginal $dv_C(S)$ is the sum of the marginals 
%$dv_{c_j}(S\cup\{c_1,\ldots,c_{j-1}\})$.

\begin{lemma}\label{lem:1}
   If $A$ and $B$ are disjointly productive, then for all $A'\subset A$
   and $B'\subset B$ and $S\subset N\setminus(A\cup B)$ 
   \[
   dv_{B'}(S\cup A')=dv_{B'}(S).
   \]
\end{lemma}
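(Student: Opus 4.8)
The plan is to reduce the multi-player statement to the two-player definition by a double telescoping argument. The core observation is that the defining equation of disjoint productivity, after rearranging $v(T\cup\{i,j\})-v(T\cup\{j\})=v(T\cup\{i\})-v(T)$, reads
\[
dv_j(T\cup\{i\})=dv_j(T)\qquad(i\in A,\ j\in B,\ T\subset N\setminus\{i,j\}),
\]
which says precisely that inserting a single player $i\in A$ into the base set leaves the marginal contribution of a single player $j\in B$ unchanged. (The definition is symmetric in $i$ and $j$, and indeed the claimed identity $dv_{B'}(S\cup A')=dv_{B'}(S)$ is equivalent, after rearranging, to $dv_{A'}(S\cup B')=dv_{A'}(S)$, so the roles of $A'$ and $B'$ may be interchanged freely.)

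First I would establish the following single-player version: for every $j\in B$, every $A'\subset A$, and every $R\subset N\setminus(A'\cup\{j\})$,
\[
dv_j(R\cup A')=dv_j(R).
\]
I prove this by induction on $|A'|$. The base case $A'=\emptyset$ is trivial. For the inductive step, pick $i\in A'$ and set $T=(R\cup A')\setminus\{i\}=R\cup(A'\setminus\{i\})$; since $i\notin T$ and $j\notin T$, the displayed two-player identity applies and strips $i$ from the base set, giving $dv_j(R\cup A')=dv_j(T)=dv_j(R\cup(A'\setminus\{i\}))$. The induction hypothesis, applied to the smaller set $A'\setminus\{i\}$, then removes the rest.

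Second, I would telescope over the elements of $B'$. Enumerating $B'=\{j_1,\dots,j_k\}$ and writing $B'_m=\{j_1,\dots,j_m\}$ with $B'_0=\emptyset$, telescoping the expansion $dv_{B'}(X)=v(X\cup B')-v(X)$ at $X=S\cup A'$ yields
\[
dv_{B'}(S\cup A')=\sum_{m=1}^{k} dv_{j_m}\bigl((S\cup B'_{m-1})\cup A'\bigr).
\]
For each $m$ the base set $S\cup B'_{m-1}$ is disjoint from $A'\cup\{j_m\}$, because $A$ and $B$ are disjoint, $S\subset N\setminus(A\cup B)$, and the $j$'s are distinct; hence the single-player version applies and rewrites the $m$-th summand as $dv_{j_m}(S\cup B'_{m-1})$. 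The resulting sum telescopes back to $v(S\cup B')-v(S)=dv_{B'}(S)$, which is the claim.

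The argument is essentially bookkeeping, so I do not anticipate a genuine obstacle; the one point that needs care is maintaining the disjointness hypotheses at each application of the two-player identity, since the base sets accumulate players of $B$ (namely $j_1,\dots,j_{m-1}$) as the telescoping proceeds. As long as those are tracked, both inductions go through mechanically.
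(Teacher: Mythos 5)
Your proof is correct and takes essentially the same route as the paper's: the paper also telescopes $dv_{B'}(S\cup A')$ into a sum of single-player marginals of members of $B'$ and then removes $A'$ using disjoint productivity, exactly your two-step argument. You have simply made explicit (via the induction on $|A'|$ and the careful tracking of base sets) the bookkeeping that the paper's two-sentence proof leaves implicit.
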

\begin{proof} 
    The marginal $dv_{B'}(S\cup A')$ is a sum of marginals of players in $B'$
    that are disjointly productive from $A'$. Therefore, removing
    $A'$ from the coalition does not change the marginal contribution.
\end{proof}

\begin{definition}
    For any $(N,v)$ and any disjoint pair of coalitions $A,B$
    we say that $(N,v^{AB})$ is the \emph{$A,B$-boycott game} if
    \begin{enumerate}
        \item $v^{AB}(S)=v(S)$
    if $S\cap A=\emptyset$ or $S\cap B=\emptyset$,
        \item $A$ and $B$ are
    disjointly productive in $(N,v^{AB})$.
    \end{enumerate}
    If $A=\{i\}$ and $B=\{j\}$ then we write $v^{ij}$
    and we have a \emph{one-on-one boycott}.
    If $A=\{i\}$ and $|B|>1$ then we write $v^{iB}$ and we
    have a \emph{many-on-one boycott}.
\end{definition}
%We will always assume that $A$ and $B$ are disjoint whenever we write $v^{AB}$.
%\medbreak
%It is not immediately clear that $v^{AB}$ is well-defined.
By lemma~\ref{lem:1} we have $v^{AB}(S\cup A'\cup B')-v^{AB}(S\cup B')=v^{AB}(S\cup A')-v^{AB}(S)$.
for $S$ disjoint from $A\cup B$ and $A'\subset A, B'\subset B$.
A rearrangement of terms gives
\begin{equation}\label{convexv}
v^{AB}(S\cup A'\cup B')=v(S\cup A')+v(S\cup B')-v(S),
\end{equation}
which defines~$v^{AB}$.

\begin{definition}
A characteristic function is \emph{supermodular} if \[v(S\cup T)+v(S\cap T)\geq v(S)+v(T)\]
for all coalitions $S,T$.
A cooperative game with a supermodular characteristic function is a \emph{convex game}.    
\end{definition}

%\noindent Another way to define supermodularity is to require
%that marginal contributions are monotonically increasing:
%\begin{equation}\label{marginal}
%    dv_i(S)\leq dv_i(T)\text{ if }S\subset T\subset N\setminus\{i\}.
%\end{equation}
%This is
%how Shapley originally defined convex games~\cite{shapley1971cores}.

\begin{theorem}\label{convexgame}
    $v^{AB}\leq v$ for all disjoint $A$ and $B$
    if and only if $(N,v)$ is convex.
\end{theorem}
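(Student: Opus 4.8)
The plan is to show that the defining identity~\eqref{convexv} turns the inequality $v^{AB}\leq v$ into a verbatim restatement of supermodularity, so that the theorem becomes a bookkeeping equivalence rather than a genuine estimate. First I would dispose of the trivial coalitions: if $T$ meets at most one of $A,B$, then by the first clause of the boycott definition $v^{AB}(T)=v(T)$, and the inequality holds with equality. Hence it suffices to treat a coalition $T$ with $T\cap A\neq\emptyset$ and $T\cap B\neq\emptyset$; writing $A'=T\cap A$, $B'=T\cap B$, and $S=T\setminus(A\cup B)$, we have $T=S\cup A'\cup B'$ with $S$ disjoint from $A\cup B$, so that~\eqref{convexv} gives
\[
v^{AB}(T)=v(S\cup A')+v(S\cup B')-v(S).
\]

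For the forward direction I would set $X=S\cup A'$ and $Y=S\cup B'$. Since $A'$ and $B'$ are disjoint and both disjoint from $S$, these satisfy $X\cap Y=S$ and $X\cup Y=T$. The inequality $v^{AB}(T)\leq v(T)$ then reads $v(X)+v(Y)-v(X\cap Y)\leq v(X\cup Y)$, which is exactly the supermodularity condition for the pair $X,Y$. Thus convexity of $(N,v)$ immediately yields $v^{AB}\leq v$ for every disjoint $A,B$.

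For the converse I would run the same identification backwards. Given an arbitrary pair $X,Y$, put $S=X\cap Y$, $A=X\setminus Y$, and $B=Y\setminus X$; these $A,B$ are disjoint, and $X=S\cup A$, $Y=S\cup B$, $X\cup Y=S\cup A\cup B$. The hypothesis applied to this choice gives $v^{AB}(X\cup Y)\leq v(X\cup Y)$, and expanding the left-hand side through~\eqref{convexv} produces $v(X)+v(Y)-v(X\cap Y)\leq v(X\cup Y)$, i.e.\ supermodularity for $X,Y$; since $X,Y$ were arbitrary, $(N,v)$ is convex. The only point requiring care — and the closest thing to an obstacle — is verifying that this set-theoretic decomposition is an exact correspondence between a coalition together with its $A,B$-split on one side and a pair $X,Y$ on the other, so that the disjointness of $A'$ and $B'$ yields precisely the $X\cap Y$ and $X\cup Y$ terms of supermodularity. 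Everything beyond that is a rearrangement of~\eqref{convexv}, and in particular the freedom to quantify over all disjoint $A,B$ is exactly what lets the converse reach every pair $X,Y$.
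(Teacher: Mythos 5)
Your proof is correct and follows essentially the same route as the paper: the paper's one-line proof ("follows immediately from inspection of Equation~\eqref{convexv}") is precisely your observation that, with $X=S\cup A'$ and $Y=S\cup B'$, the inequality $v^{AB}\leq v$ is a verbatim restatement of supermodularity. You have merely spelled out the inspection in both directions, including the converse's choice $A=X\setminus Y$, $B=Y\setminus X$, which is the intended argument.
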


\begin{proof}
     This follows immediately from inspection of Equation~\ref{convexv}.
\end{proof}

%In fact, it suffices to consider boycotts by individual players: 
%$v$ is convex if and only if $v^{ij}\leq v$ for
%$i,j\in N$, (see e.g.~\cite{ichiishi}). 
The purpose of a boycott is a reduction of the opponent's utility,
which is exactly what happens in a convex game.
Also, larger boycotts inflict more harm:

%\begin{lemma}\label{lem:2}
%    Let $(N,v^{AB})$ be an $A,B$-boycott game and write $w=v^{AB}$. 
%    If $A\subset C$ and $B\subset D$ then $v^{CD}=w^{CD}$.
%\end{lemma}

%\begin{proof}
%    The boycott game has the unique property that $v^{CD}(S)=v(S)$ if $S$ does not contain
%    players from $C$ or $D$ and players in $C,D$ are disjointly productive.
%    Since $w(S)=v(S)$ and since $C$ and $D$ are disjointly productive in $w^{CD}$
%    it follows that $v^{CD}=w^{CD}$.
%\end{proof}

\begin{theorem}\label{lem:3} 
    If $A\subset C$ and $B\subset D$ then $v^{CD}\leq v^{AB}$ for a convex game.
\end{theorem}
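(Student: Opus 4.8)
The plan is to base the entire argument on a single closed form for the boycott value that is valid for \emph{every} coalition, and then to invoke nothing about convexity beyond the monotonicity of marginal contributions. Starting from Equation~\ref{convexv}, I would first record that for any disjoint pair $X,Y$ and any coalition $T$, writing $x=T\cap X$ and $y=T\cap Y$,
\[
v^{XY}(T)=v(T\setminus y)+v(T\setminus x)-v\bigl(T\setminus(x\cup y)\bigr).
\]
When both $x$ and $y$ are nonempty this is exactly Equation~\ref{convexv} with $S=T\setminus(X\cup Y)$, $A'=x$, $B'=y$; and when $x=\emptyset$ or $y=\emptyset$ the right-hand side collapses to $v(T)$, matching the first clause of the boycott definition. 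Thus this master formula holds in all cases and spares us any later case analysis on which of $A,B,C,D$ the coalition $T$ happens to meet.

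The second ingredient is the standard reformulation of convexity: for a convex $(N,v)$ and any fixed coalition $W$, the marginal $P\mapsto dv_W(P)$ is nondecreasing in $P$ over coalitions disjoint from $W$; that is, $P\subset Q$ with $W\cap Q=\emptyset$ implies $dv_W(P)\le dv_W(Q)$. This is supermodularity for a single added element, promoted to the set $W$ by telescoping over its members. With the master formula and this monotonicity in hand, I would reduce the two-sided enlargement to a one-sided one. Since Equation~\ref{convexv} is symmetric in the two groups we have $v^{XY}=v^{YX}$, so it suffices to prove: if $A\subset C$ and $B$ is disjoint from $C$, then $v^{CB}\le v^{AB}$. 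The theorem then follows from the chain $v^{CD}\le v^{CB}\le v^{AB}$, where the first step enlarges the second group from $B$ to $D$ (rewritten via symmetry as enlarging a first group) and the second enlarges $A$ to $C$; the intermediate pair $C,B$ is disjoint because $C\cap D=\emptyset$ and $B\subset D$.

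For the one-sided step I would evaluate the master formula at an arbitrary $T$, with $a=T\cap A\subset c=T\cap C$ and $b=T\cap B$. The term $v(T\setminus b)$ is common to $v^{AB}(T)$ and $v^{CB}(T)$ and cancels, and using that $b$ is disjoint from both $a$ and $c$ one gets
\[
v^{AB}(T)-v^{CB}(T)=dv_b\bigl(T\setminus(a\cup b)\bigr)-dv_b\bigl(T\setminus(c\cup b)\bigr).
\]
Because $a\subset c$ forces $T\setminus(c\cup b)\subset T\setminus(a\cup b)$, and $b$ is disjoint from both, the monotonicity of marginals makes this difference nonnegative, giving $v^{CB}(T)\le v^{AB}(T)$ for all $T$.

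I expect the main obstacle to be bookkeeping rather than depth. The two delicate points are verifying that the master formula genuinely covers the degenerate coalitions (where $T$ misses one of the groups) so that the clean cancellation of $v(T\setminus b)$ is legitimate in every case, and correctly promoting single-element supermodularity to the set-marginal monotonicity used for the whole coalition $b$. Once convexity is phrased as monotone marginal contributions, the rest is forced; the monotonicity is also exactly where the hypothesis that $(N,v)$ is convex enters, mirroring its role in Theorem~\ref{convexgame}.
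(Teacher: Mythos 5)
Your proof is correct, and it takes a genuinely different route from the paper's, even though both rest on the same two ingredients: the closed form of Equation~\ref{convexv} and convexity expressed as monotonicity of set-marginals. The paper compares $v^{CD}(S)$ and $v^{AB}(S)$ in one shot: it partitions $S$ into the five pieces $S_A\cup S_{C\setminus A}\cup S_0\cup S_B\cup S_{D\setminus B}$, expands both sides via the closed form, and absorbs the difference into two simultaneous applications of supermodularity (one for the marginal of $S_{D\setminus B}$, one for the marginal of $S_{C\setminus A}$). You instead factor the comparison through the intermediate game $v^{CB}$, using the symmetry $v^{XY}=v^{YX}$ to reduce everything to a one-sided enlargement lemma in which only one group grows; each application of that lemma cancels the common term $v(T\setminus b)$ and needs only a single marginal-monotonicity inequality, and the chain $v^{CD}\le v^{CB}\le v^{AB}$ finishes the argument (you correctly verify that the intermediate pair $C,B$ is disjoint, so $v^{CB}$ is defined, and that $B$ stays disjoint from $C$ so both lemma applications are legitimate). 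What your route buys is modularity and lighter bookkeeping — no five-way partition, one inequality per step — plus an explicit check that the master formula covers degenerate coalitions missing $A$ or $B$, a point the paper passes over silently (it holds there too, since empty $A'$ or $B'$ is allowed in Equation~\ref{convexv}). What the paper's route buys is brevity: a single display and one appeal to supermodularity, at the cost of heavier notation. One small side remark: your usage $dv_b(S)=v(S\cup b)-v(S)$ is the one consistent with the paper's own definition of marginals; the paper's proof silently swaps the roles of subscript and argument when it writes $dv_X(Y)$, so your bookkeeping is, if anything, the cleaner of the two.
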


\begin{proof}
    For coalitions $S$ and $V$ we write $S_V=S\cap V.$
    Partition a coalition $S$ into $S_A\cup S_{C\setminus A}\cup S_0\cup S_B\cup S_{D\setminus B}$,
    where $S_0=S\setminus(C\cup D)$. We need to prove that $v^{AB}(S)\geq v^{CD}(S)$, which 
    expands into
    {\small
    \begin{equation*}
        \begin{array}{c}
        v(S_A\cup S_{C\setminus A}\cup S_0\cup S_{D\setminus B})+v(S_{C\setminus A}\cup S_0\cup S_{D\setminus B}\cup S_B)-v(S_{C\setminus A}\cup S_0\cup S_{D\setminus B})\\
        \geq\\
        v(S_A\cup S_{C\setminus A}\cup S_0)+v(S_0\cup S_{D\setminus B}\cup S_B)-v(S_0)
        \end{array}
    \end{equation*}
    }
    Converting to marginals and rearranging the terms gives
        \begin{equation*}
        \begin{array}{c}
        dv_{S_A\cup S_{C\setminus A}\cup S_0}(S_{D\setminus B}) +  
        dv_{S_0\cup S_{D\setminus B}\cup S_B}(S_{C\setminus A})\\
        \geq\\
        dv_{S_{C\setminus A}\cup S_0}(S_{D\setminus B})+dv_{S_0}(S_{C\setminus A})
        \end{array}
        \end{equation*}
    As $v$ is supermodular, marginals $dv_X(Y)$
    increase with $X$. Therefore, the left-hand side is indeed larger than the right-hand side.
\end{proof}

Convex games provide the right setting for studying boycotts. 
We now 
determine their impact on individual players.

\section{The impact of a boycott}

A \emph{TU-value} $\varphi$ assigns
a vector $\varphi(N,v)$ to a game $(N,v)$. 
Its coordinates are $\varphi_i(N,v)$, or simply $\varphi_i$, for players $i\in N$.
We write $\varphi(S)=\sum_{i\in S}\varphi_i$. 
It is \emph{efficient} if $\varphi(N)=v(N)$. 
%Arguably, the
%most common efficient TU-value is the \emph{Shapley value}~\cite{shapley1953value}.
%It can be interpreted
%as the expected marginal contribution.
%A
%TU-value is in the \emph{core of the game} if $\varphi(S)\geq v(S)$ for all coalitions.
%Shapley proved that the core of a convex game is non-empty and contains the Shapley %value~\cite{shapley1971cores}. 
%\medbreak
To measure a boycott's impact we compare its value to the value
of the original game. 
%We say that this is the impact,
%and larger impacts are more detrimental for a player. Negative impacts are
%beneficial.

\begin{definition}
    The \emph{impact} of a boycott is $\varphi(v)-\varphi(v^{AB}).$
\end{definition}

In a boycott game, $i\in A$ essentially only contributes to $N\setminus B$.
Therefore, it is natural to require that
the value of $i$ in $(N,v^{AB})$ equals its value in $(N\setminus B,v)$. 
We say that such a value is \emph{boycott respecting.}
A value has \emph{balanced impact} if 
in one-on-one boycotts
\[\varphi_i(v)-\varphi_i(v^{ij})=\varphi_j(v)-\varphi_j(v^{ij}).\]
Besner~\cite{besner2022impacts} proved that the Shapley value is
the unique efficient TU-value that
respects boycotts and has balanced impact.
We therefore restrict
out attention to the Shapley value, denoted by $\phi(v)$.

\begin{definition}
For a coalition $C\subset N$, the 
\emph{subgame} $(N,v_C)$ is defined by
\[
v_C(S)=v(S\cap C).
\]
If $\bar A$ denotes the complement of $A$, then the boycott game is
\[
v^{AB}=v_{\bar A}+v_{\bar B}-v_{\bar A\cap \bar B}.
\]
\end{definition}
Since the Shapley value $\phi$ is additive,
the impact of
a boycott is 
\[
\phi(v)-\phi(v_{\bar A})-\phi(v_{\bar B})+\phi(v_{\bar A\cap \bar B}).
\]
A player $i\in A$ 
is a null-player in the subgames on $\bar A$ and $\bar A\cap\bar B$.
For such a player the impact is 
\[
\phi_i(v)-\phi_i(v_{\bar B}).
\]

\begin{theorem}
In a many-on-one boycott $v^{iB}$ the impact
is maximal for $i$.
\end{theorem}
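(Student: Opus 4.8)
\section*{Proof proposal}

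The plan is to pass to the difference game and exploit that a one-sided boycott leaves most of the characteristic function untouched. Write $u=v-v^{iB}$. Since $\phi$ is additive, the impact on any player $k$ is exactly $\phi_k(u)$, so the statement reduces to $\phi_i(u)\ge\phi_k(u)$ for every $k\neq i$. Two features of $u$ will drive everything. First, because $A=\{i\}$, the boycott game satisfies $v^{iB}(S)=v(S)$ whenever $i\notin S$, so $u(S)=0$ for every coalition $S$ with $i\notin S$. Second, by Theorem~\ref{convexgame} convexity gives $v^{iB}\le v$, hence $u\ge 0$. These are the only two properties I will need, and they separate the two hypotheses: the asymmetry $A=\{i\}$ supplies the first, convexity supplies the second.

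Next I would set up a clean pairwise comparison. Writing the Shapley value with the usual weights $w_n(s)=\frac{s!(n-s-1)!}{n!}$, where $n=|N|$, I split each of $\phi_i(u)$ and $\phi_k(u)$ according to whether the other distinguished player lies in the coalition. The first-order term $du_i(S)-du_k(S)$ (marginals taken in the game $u$, so $du_i(S)=u(S\cup\{i\})-u(S)$) and the ``second-order'' term $du_i(S\cup\{k\})-du_k(S\cup\{i\})$ both collapse to $u(S\cup\{i\})-u(S\cup\{k\})$, and combining the two weight families through the identity $w_n(s)+w_n(s+1)=w_{n-1}(s)$ yields
\[
\phi_i(u)-\phi_k(u)=\sum_{S\subset N\setminus\{i,k\}}w_{n-1}(|S|)\bigl(u(S\cup\{i\})-u(S\cup\{k\})\bigr).
\]

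The conclusion is then immediate. For $S\subset N\setminus\{i,k\}$ the coalition $S\cup\{k\}$ does not contain $i$, so $u(S\cup\{k\})=0$ by the first feature above, while $u(S\cup\{i\})\ge 0$ by the second. Every summand is therefore nonnegative, giving $\phi_i(u)\ge\phi_k(u)$ for all $k\neq i$; that is, $i$ bears the maximal impact. I would stress that this single computation handles the boycotting players $k\in B$ and the bystanders $k\notin\{i\}\cup B$ simultaneously, with no case distinction.

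The only real work is the bookkeeping in the displayed identity: one must track the two weight families and verify that both difference terms reduce to $u(S\cup\{i\})-u(S\cup\{k\})$. I would check the weight identity $w_n(s)+w_n(s+1)=w_{n-1}(s)$ first, since the whole collapse hinges on it. Should the direct route prove fiddly, an alternative is to expand each impact through the decomposition $v^{iB}=v_{\bar A}+v_{\bar B}-v_{\bar A\cap\bar B}$ and compare the resulting subgame Shapley values; but working with $u$ directly avoids juggling several subgame terms and makes transparent exactly where each hypothesis is used.
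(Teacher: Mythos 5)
Your proof is correct, and it takes a genuinely different route from the paper's. You work directly with the difference game $u=v-v^{iB}$ and reduce everything to two structural facts: $u\ge 0$ (convexity, via Theorem~\ref{convexgame}) and $u(S)=0$ whenever $i\notin S$ (the definition of the boycott game applied with $A=\{i\}$). You then finish with the pairwise identity
\[
\phi_i(u)-\phi_k(u)=\sum_{S\subset N\setminus\{i,k\}}w_{n-1}(|S|)\bigl(u(S\cup\{i\})-u(S\cup\{k\})\bigr),
\]
whose weight computation $w_n(s)+w_n(s+1)=w_{n-1}(s)$ checks out, so every summand collapses to $w_{n-1}(|S|)\,u(S\cup\{i\})\ge 0$ and the conclusion follows for all $k\neq i$ at once. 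The paper instead expands the impact through the subgame decomposition $v^{iB}=v_{\bar A}+v_{\bar B}-v_{\bar A\cap\bar B}$, treats boycotters $j\in B$ and bystanders $k$ in separate cases, and invokes two further properties of the Shapley value: balanced contributions (to convert impacts on $j$ or $k$ into expressions involving only $i$) and monotonicity of subgame Shapley values in convex games, a population-monotonicity fact that the paper glosses as ``monotonicity.'' Your argument buys uniformity and economy of hypotheses: a single computation covers every $k\neq i$, and the only consequence of convexity you need is $v^{iB}\le v$, which is exactly Theorem~\ref{convexgame}. What the paper's route buys is explicit formulas for the impact on each class of player (for example $\phi_i(v)-\phi_i(v_{\bar B})$ for the target), which it reuses in the subsequent discussion of trade blocks, and an argument that stays within the axiomatic framework (additivity, null player, balancedness) that motivated restricting attention to the Shapley value in the first place.
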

\begin{proof}
     The impact on $i$ is $\phi_i(v)-\phi_i(v_{\bar B})\geq 
     \phi_i(v)-\phi_i(v_{N\setminus \{j\}})$ for any $j\in B$
     since the game is convex.
     The impact on $j\in B$ is $\phi_j(v)-\phi_j(v_{N\setminus \{i\}})$
     which is equal to $\phi_i(v)-\phi_i(v_{N\setminus \{j\}})$
     by balancedness of $\phi$.
     The impact on $i$ is greater than or equal to the impact on $j\in B$.
     
     The impact on a non-participating player $k\in\bar A\cap \bar B$ is 
     \[
     \phi_k(v)-\phi_k(v_{N\setminus \{i\}})-\phi_k(v_{\bar B})+\phi_k(v_{\bar B\setminus \{i\}}).
     \]
     By balancedness this equals
     \[
     \phi_i(v)-\phi_i(v_{N\setminus \{k\}})-\phi_i(v_{\bar B})+\phi_i(v_{\bar B\setminus \{k\}}),
     \]
     which by monotonicity and $\bar B\setminus \{k\}\subset N\setminus \{k\}$ is bounded by the impact on $i$
     \[
     \phi_i(v)-\phi_i(v_{\bar B}).
     \]
\end{proof}

\noindent \textbf{Example.} Non-participating players may benefit. Consider the three-player cooperative game
$v(\{1\})=v(\{2\})=v(\{3\})=0$ and 
$v(\{1,2\})=v(\{1,3\})=v(\{2,3\})=6$ and $v(\{1,2,3\})=12$.
Note that $v(S)=6(|S|-1)$ where $|S|$ denotes
the number of elements. This is a
Myerson communication game on a triangle%~\cite{myerson1977graphs}
with Shapley value $(\phi_1,\phi_2,\phi_3)=(4,4,4)$.
If $1$ boycotts $2$ then $v^{12}(\{1,2\})=0$ but all other values
remain the same. 
The Shapley value of $v^{12}$ is $(3,3,6)$.
Player 3 profits from the boycott. The boycott
deletes the edge between 1 and 2 from the triangle so that 3 becomes central.
\medbreak

Players that participate in a boycott stand to lose.
Players that are unaffected stand to gain. The following
makes that precise.

\begin{definition}
    A player $i$ is \emph{invariant} under a boycott if $v(S)=v^{AB}(S)$ for
    all coalitions that contain $i$.
\end{definition}

\begin{theorem}
%    In a many-on-one boycott, the impact is maximal for the target of this boycott. 
    The impact is non-negative for players that participate in
    the boycott and non-positive for invariant players.
\end{theorem}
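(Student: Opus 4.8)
The plan is to write each player's impact as a single sum over coalitions $S\not\ni i$ of differences of marginal contributions, weighted by the Shapley weights $w(S)=|S|!\,(n-1-|S|)!/n!$ with $n=|N|$. Because $v$, $v^{AB}$ and every subgame $v_C$ are games on the same player set $N$, these weights coincide across the games being compared, so it suffices to determine the sign of each difference of marginals termwise. There are two cases, and in each the sign of every summand is forced by supermodularity.

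First I would treat a participating player $i\in A$. The excerpt already reduces its impact to $\phi_i(v)-\phi_i(v_{\bar B})$, which equals
\[
\sum_{S\not\ni i} w(S)\Bigl(dv_i(S)-\bigl(v_{\bar B}(S\cup\{i\})-v_{\bar B}(S)\bigr)\Bigr).
\]
Since $i\in\bar B$, the subgame marginal simplifies to $v_{\bar B}(S\cup\{i\})-v_{\bar B}(S)=v\bigl((S\cap\bar B)\cup\{i\}\bigr)-v(S\cap\bar B)=dv_i(S\cap\bar B)$. Supermodularity makes $dv_i$ increase with its base coalition, and $S\cap\bar B\subseteq S$ forces $dv_i(S\cap\bar B)\le dv_i(S)$; every summand is therefore non-negative and the impact on $i$ is non-negative. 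Interchanging the roles of $A$ and $B$ disposes of $j\in B$.

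For an invariant player $i$, invariance gives $v^{AB}(S\cup\{i\})=v(S\cup\{i\})$ for every $S$. Writing the impact as
\[
\sum_{S\not\ni i} w(S)\Bigl(\bigl(v(S\cup\{i\})-v(S)\bigr)-\bigl(v^{AB}(S\cup\{i\})-v^{AB}(S)\bigr)\Bigr),
\]
the terms at $S\cup\{i\}$ cancel by invariance and each bracket collapses to $v^{AB}(S)-v(S)$, which is non-positive by Theorem~\ref{convexgame}. Hence every summand is non-positive and the impact is non-positive, as claimed.

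The difficulty here is bookkeeping rather than depth. The step one must not skip is the observation that the two Shapley values in each comparison are computed with identical weights—valid precisely because all the games live on the common set $N$—so that a termwise comparison of marginals is legitimate. The one genuinely delicate point is, in the invariant case, to invoke invariance at $S\cup\{i\}$ rather than at $S$: this is exactly what makes the difference of marginals telescope to the sign-definite quantity $v^{AB}(S)-v(S)$ instead of leaving an indeterminate expression.
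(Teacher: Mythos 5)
Your proof is correct and takes essentially the same route as the paper's: for $i\in A$ you spell out as a termwise comparison of marginals ($dv_i(S\cap\bar B)\le dv_i(S)$ by supermodularity) what the paper compresses into ``non-negative by monotonicity.'' For the invariant player, your telescoping of each bracket to $v^{AB}(S)-v(S)\le 0$ is precisely the paper's inequality $dv^{AB}_k(S)\ge dv_k(S)$, obtained in the same way from invariance at $S\cup\{k\}$ together with $v^{AB}\le v$ from Theorem~\ref{convexgame}.
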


\begin{proof}
%    We already saw that the impact is maximal for the target of a many-on-one boycott.
    If $i\in A$ participates in a boycott, then the impact $\phi_i(v)-\phi_i(v_{\bar B})$
    is non-negative by monotonicity.
    If $k$ is invariant, then \[dv^{AB}_k(S)=v^{AB}(S\cup\{k\})-v^{AB}(S)=v(S\cup\{k\})-v^{AB}(S)
    \geq dv_k(S).\]
    Since its marginal contribution is non-decreasing, so is its expected marginal contribution,
    which is equal to the Shapley value. The impact is non-positive.
\end{proof}

\section{The impact of boycotts on trade blocks}

To gain insight in the impact of boycotts on trade networks
we consider some sample games.
Computing the Shapley value of
a network game requires non-trivial algorithms,
which is why we consider simple networks 
known as \emph{block graphs}~\cite{algaba}.
%All games are Myerson communication games.
\medbreak
\textbf{A single homogeneous trade block.}
Consider the characteristic function $v(S)=|S|-1$. 
Its Shapley value is $\phi_i(v)=1-\frac {1}n$ for all $i$,
where $n=|N|$. 
If $A$ of size $a$ boycotts $B$ of size $b$ then 
$\phi_i(v^{AB})=1-\frac {1}{n-b}$
for $i\in A$ and 
$\phi_j(v^{AB})=1-\frac {1}{n-a}$ for $j\in B$.
Non-participating
players are invariant and benefit slightly. 
The boycott has minimal impact,
unless a coalition is substantial.
Trade blocks are sheltered against internal trade wars.
\medbreak
\textbf{A single heterogeneous trade block.}
Now there is a special $x\in N$.
We have $v(S)=|S|-1$ if $x\not\in S$ and
$v(S)=3(|S|-1)$ if $x\in S$. The
Shapley value is $\phi_i(v)=2-\frac{1}{n}$ for $i\not =x$
and $\phi_x(v)=n-\frac 1{n}$. In a many-on-one boycott
of $A$ versus $x$, we have
$\phi_i(v^{Ax})=1-\frac{1}{n-1}$ for $i\in A$
and
$\phi_x(v^{Ax})=n-a-\frac{1}{n-a}$.
The impact on non-participating players is negligible.
This situation is similar to a consumer boycott, in which
it is the question if the number of participating consumers 
(whose value halves) is enough 
to make the producer $x$ change its policy~\cite{delacote}.

\medbreak
\textbf{Boycotts between trade blocks.}
Trade networks tend to fall apart into large
internal markets with few outside connections~\cite{wilhite2001bilateral}.
Let $N=I\cup J\cup K$ with equal sized blocks $|I|=|J|=|K|=n$.
The blocks have interconnecting key players $i\in I, j\in J, k\in K$. 
We have $v(S)=|S|$
if there are less than two key players
within $S$.
Key players connect the trade and double 
the value. For instance,
$v(S)=|S|+|S\cap I|+|S\cap K|$ if $i,k\in S$ but $j\not\in S$.
If all three key players are in, then 
$v(S)=3|S|$. The Shapley value is
$\phi_x(v)=\frac 53$ for ordinary players and $\phi_y(v)=\frac 43n+\frac 53$
for key players. Obviously, key players are valueable. 

If $I$ boycotts $J$ then trade between these blocks disintegrates.
The value $v(N)=9n$ reduces to $v^{ij}(N)=7n$.
Trade
block $K$ is unaffected. 
Trade blocks are sheltered against trade wars between other blocks.
The value of non-key players in $I\cup J$ reduces to $\frac 43$,
and for key players it halves
$\frac 23 n+\frac 43$. 
Key players will therefore be hesitant to join. 
Suppose that key player $i$ drops out of the boycott.
Now the ordinary players in $J$ are unaffected and
retain Shapley value $\frac 53$ while players in $I\setminus\{i\}$
remain at $\frac 43$.
The maximum impact is on $i$ and $j$ with value $n+2$.
This boycott hurts $I$ more than $J$.
\begin{table}[h]
\centering
\caption{Russian exports of mineral fuels 2021–23 -- billions of US dollars.
{\tiny Source: Bruegel Russian Foreign Trade Tracker, 2023}.}\label{tab:1}
\begin{tabular}[t]{lccccc}
\hline
Importer&2021&&2022&&2023\\
\cmidrule{2-6}
&spring
&fall&spring&fall&spring\\
\hline
EU27&49.8
&70.6&89.5&61.7&18.5\\
UK&3.5
&3.6&3.2&0.1&0.0\\
China&21.6
&31.0&37.9&45.6&45.9\\
India&1.6
&2.6&9.6&23.7&27.9\\
Turkey&2.5
&3.0&18.8&23.1&15.1\\
\hline
World&97.5
&134.5&178.4&165.0&116.0
\end{tabular}
\end{table}%

\medbreak

Actual trade networks are much more versatile. 
After the Western nations imposed sanctions on the Russian Federation
in 2022, trade from Europe to Russia diverted through members of the
Eurasian Economic Union 
%(Armenia, Belarus, 
%Kazakhstan, Kyrgyzstan, and Russia)
%, according to an analysis of the
%European Bank for Reconstruction and Development~\cite{chupilkin2023eurasian}.
while India became a major consumer of Russian oil~\cite{schott}. 
%increasing its share
%from 2 to 16 percent of Russian energy exports
In February 2023, the EU and the UK stopped imports of crude oil from Russia. 
In anticipation
Russia redirected its trade to the non-participating countries China, India and Turkey, 
compensating for
the impact in 2023,
see Table~\ref{tab:1}. 
The extension of our work to more realistic trade networks is a 
challenging computational task~\cite{skibski}.

%\section{Conclusion}

%We extended Besner's work from single-player to
%many-player boycotts. We showed that convex games provide a good
%setting for studying the impact of boycotts, in particular
%for boycotts within or between trade blocks. 
%We considered simplified trade blocks only.
%The extension of our work to more realistic trade networks is a 
%challenging computational task.

\bibliographystyle{siam}
\bibliography{boycott}
\end{document}